\newcommand{\mEndProof}{\hfill$\square$} 
\title{ The Crossing Number of Seq-Shellable Drawings of Complete Graphs }
\author{ Petra Mutzel \and Lutz Oettershagen }
\institute{
	Department of Computer Science\\TU Dortmund University 
}
\date{\today} 
\begin{document}
\maketitle
\begin{abstract}
	The Harary-Hill conjecture states that for every $n>0$ the complete graph on $n$ vertices $K_n$, the minimum number of crossings over all its possible drawings equals
	\begin{align*}
	H(n) := \frac{1}{4}\Big\lfloor\frac{n}{2}\Big\rfloor\Big\lfloor\frac{n-1}{2}\Big\rfloor\Big\lfloor\frac{n-2}{2}\Big\rfloor\Big\lfloor\frac{n-3}{2}\Big\rfloor\text{.}
	\end{align*} 
	So far, the lower bound of the conjecture could only be verified for arbitrary drawings of $K_n$ with $n\leq 12$.
	In recent years, progress has been made in verifying the conjecture for certain classes of drawings,
	for example $2$-page-book, $x$-monotone, $x$-bounded, shellable and bishellable drawings. 
	Up to now, the class of bishellable drawings was the broadest class for which the Harary-Hill conjecture has been verified, as it contains all beforehand mentioned classes.
	In this work, we introduce the class of \emph{seq-shellable} drawings and verify the Harary-Hill conjecture for this new class.
	We show that bishellability implies seq-shellability and exhibit a non-bishellable but seq-shellable drawing of $K_{11}$, therefore the class of seq-shellable drawings strictly contains the class of bishellable drawings.
\end{abstract}
\section{Introduction}
Let $G=(V,E)$ be an undirected graph and $K_n$ the complete graph on $n>0$ vertices.
The crossing number $cr(G)$ of $G$ is the smallest number of edge crossings over all possible drawings of $G$. 
In a drawing $D$ every vertex $v \in V$ is represented by a point and every edge $uv\in E$ with $u,v \in V$ is represented by a simple curve connecting the corresponding points of $u$ and $v$.
The Harary-Hill conjecture states the following. 
\begin{conjecture}[Harary-Hill \cite{Guy1960}]   
	Let $K_n$ be the complete graph with $n$ vertices, then
	\begin{align*}
	cr(K_n) = H(n) &:= \frac{1}{4}\Big\lfloor\frac{n}{2}\Big\rfloor\Big\lfloor\frac{n-1}{2}\Big\rfloor\Big\lfloor\frac{n-2}{2}\Big\rfloor\Big\lfloor\frac{n-3}{2}\Big\rfloor \text{.}
	\end{align*} 
\end{conjecture}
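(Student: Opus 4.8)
The plan is to establish the two inequalities $cr(K_n)\le H(n)$ and $cr(K_n)\ge H(n)$ separately. For the upper bound I would exhibit an explicit drawing attaining $H(n)$ crossings; the classical choice is the cylindrical drawing, placing $\lceil n/2\rceil$ vertices on an inner circle and the remaining $\lfloor n/2\rfloor$ on a concentric outer circle, drawing the edges inside each circle within that disc and routing the edges between the two circles monotonically through the annulus. A direct count of the crossings produced by this construction gives exactly $H(n)$, so $cr(K_n)\le H(n)$. This direction is completely known and essentially a calculation, so I would spend no real effort on it.

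The substance is the lower bound $cr(K_n)\ge H(n)$, which is the actual content of the conjecture. My plan follows the $(\le k)$-edge approach. For a good drawing $D$ of $K_n$ let $E_k(D)$ count the $k$-edges (the edges having exactly $k$ of the other $n-2$ vertices on one of their two sides, read off from the rotation system of $D$) and set $E_{\le k}(D)=\sum_{i=0}^{k}E_i(D)$. The key reduction is the counting identity that writes $cr(D)$ as a fixed positive combination of the numbers $E_{\le k}(D)$, from which one obtains
\begin{align*}
cr(D)\ \ge\ H(n)\qquad\text{whenever}\qquad E_{\le k}(D)\ \ge\ 3\binom{k+2}{2}\ \text{ for all } 0\le k\le \big\lfloor\tfrac{n}{2}\big\rfloor-2 .
\end{align*}
Thus the lower bound for every drawing reduces to verifying this one family of inequalities on the $(\le k)$-edge counts.

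To discharge these inequalities I would use the class machinery of this paper. Since the statement is about $cr(K_n)$, it is enough to treat a crossing-minimal good drawing $D$. The plan is to prove that every such $D$ is \emph{seq-shellable} and then invoke the result of this paper, namely that $E_{\le k}(D)\ge 3\binom{k+2}{2}$ (equivalently $cr(D)\ge H(n)$) holds for all seq-shellable drawings; because seq-shellability subsumes the book, $x$-monotone, $x$-bounded, shellable and bishellable classes, this is the weakest structural hypothesis presently known to force the bound. The hard part — and exactly the reason the conjecture is still open — is this structural step: there is currently no argument that an arbitrary, or even a crossing-optimal, good drawing admits the nested sequence of shellable vertex orderings that seq-shellability demands. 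I would attack it by induction on $n$, removing a suitably chosen vertex of $D$ and attempting to lift a seq-shellable certificate of the resulting $(n-1)$-vertex subdrawing back to $D$; the obstruction concentrates in controlling how the rotation around the removed vertex interacts with that certificate, and closing this gap for all drawings is the missing ingredient.
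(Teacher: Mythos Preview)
The statement you are attempting to prove is a \emph{conjecture}; the paper does not prove it and explicitly states that the lower bound is open for arbitrary drawings of $K_n$ with $n>12$. There is therefore no ``paper's own proof'' to compare your proposal against. The paper only verifies the Harary--Hill bound for the restricted class of seq-shellable drawings.

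Your plan contains a genuine and, by your own admission, unresolved gap: you reduce the lower bound to the assertion that every crossing-minimal good drawing of $K_n$ is seq-shellable, but you have no argument for this, and none is known. This step is not a technicality---it is the entire content of the conjecture. Note also that your criterion $E_{\le k}(D)\ge 3\binom{k+2}{2}$ is not the one the paper uses; the paper works with the doubly cumulated quantity $E_{\le\le k}(D)=\sum_{i=0}^{k}(k+1-i)E_i(D)$ and the bound $E_{\le\le k}(D)\ge 3\binom{k+3}{3}$ (Lemma~\ref{lemma:doublecumubound}). The single-cumulated bound you wrote is already known to fail for some good drawings, which is precisely why the double cumulation was introduced.
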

There are construction methods for drawings of $K_n$ that lead to exactly $H(n)$ crossings,
for example the class of \emph{cylindrical} drawings first described by Harary and Hill \cite{hararyhill1963}.
For a cylindrical drawing, we put $\lfloor\frac{n}{2}\rfloor$ vertices on the top rim and the remaining $\lceil\frac{n}{2}\rceil$ vertices on the bottom rim of a cylinder.
Edges between vertices on the same rim (lid or bottom) are connected with straight lines on the lid or bottom. Two vertices on opposite rims are connected with an edge along the geodesic between the two vertices. The drawing of $K_6$ in figure \ref{fig:example} (a) is homeomorphic to a planarized cylindrical drawing of $K_6$.

However, there is no proof for the lower bound of the conjecture for arbitrary drawings of $K_n$ with $n>12$.
The cases for $n\leq 10$ are shown by Guy \cite{Guy1960} and for $n=11$ by Pan and Richter \cite{DBLP:journals/jgt/PanR07}. 
Guy \cite{Guy1960} argues that $cr(K_{2n+1})\geq H(2n+1)$ implies $cr(K_{2(n+1)})\geq H(2(n+1))$, hence $cr(K_{12})\geq H(12)$.
McQuillan et al. showed that $cr(K_{13})\geq 219$ \cite{DBLP:journals/jct/McQuillanPR15}. 
{\'A}brego et al. \cite{abrego2015all} improved the result to $cr(K_{13})\in\{223,225\}$. 

Beside these results for arbitrary drawings, there has been success in proving the Harary-Hill conjecture for different classes of drawings. 
So far, the conjecture has been verified for 2-page-book \cite{Abrego:2012:CNK:2261250.2261310}, $x$-monotone \cite{DBLP:journals/dcg/AbregoAF0S14,DBLP:journals/dcg/BalkoFK15,ABREGO2013411}, $x$-bounded  \cite{DBLP:journals/dcg/AbregoAF0S14}, shellable \cite{DBLP:journals/dcg/AbregoAF0S14} and bishellable drawings \cite{AbregoAFMMM0RV15}.
The class of bishellable drawings comprises all beforehand mentioned classes, and until now it was the largest class of drawings for which the Harary-Hill conjecture has been verified.
{\'A}brego et al. \cite{AbregoAFMMM0RV15} showed that the Harary-Hill conjecture holds for bishellable drawings using cumulated $k$-edges. 
\paragraph{Our contribution.}
In this work, we introduce the new class of \emph{seq-shellable} drawings and verify the Harary-Hill conjecture for this new class. 
We show that bishellability implies seq-shellability and exhibit a drawing of $K_{11}$ which is seq-shellable but not bishellable.
Therefore, we establish that the class of seq-shellable drawings is strictly larger than the class of bishellable drawings.

The outline of this paper is as follows. In section \ref{sec:soa} we present the preliminaries, and in particular the background on $k$-edges, cumulated $k$-edges and their usage for verifying the Harary-Hill conjecture.
In section \ref{sec:seqsehll} we define \emph{simple sequences} and their usage for proving lower bounds on the number of invariant edges.
We present the definition of seq-shellability, verify the Harary-Hill conjecture for the new class and show its superiority towards the class of bishellable drawings.
Finally, in section \ref{sec:conclusions} we draw our conclusion and close with open questions.

\section{Preliminaries}\label{sec:soa}
Formally, a \emph{drawing} $D$ of a graph $G$ on the plane is an injection $\phi$ from the vertex set $V$ into the plane, and a mapping of the edge set $E$ into the set of simple curves, such that the curve corresponding to the edge $e = uv$ has endpoints $\phi(u)$ and $\phi(v)$, and contains no other vertices \cite{szekely2000successful}. 
We call an intersection point of the interior of two edges a crossing and a shared endpoint of two adjacent edges is not considered a crossing. 
The crossing number $cr(D)$ of a drawing $D$ equals the number of crossings in $D$ and the crossing number $cr(G)$ of a graph $G$ is the minimum crossing number over all its possible drawings.
We restrict our discussions to \emph{good} drawings of $K_n$, and call a drawing \emph{good} if $(1)$ any two of the curves have finitely many points in common, $(2)$ no two curves have a point in common in a tangential way, $(3)$ no three curves cross each other in the same point, $(4)$ any two edges cross at most once and $(5)$ no two adjacent edges cross. It is known that every drawing with a minimum number of crossings is good \cite{schaefer2013graph}.
In the discussion of a drawing $D$, we call the points also vertices, the curves edges and 
$V$ denotes the set of vertices (i.e. points), and $E$ denotes the edges (i.e curves) of $D$. 
If we subtract the drawing $D$ from the plane, a set of open discs remain. 
We call $\mathcal{F}(D) := \mathbb{R}^2 \setminus D$ the set of \emph{faces} of the drawing $D$.
If we remove a vertex $v$ and all its incident edges from $D$, we get the subdrawing $D-v$. 
Moreover, we might consider the drawing to be on the surface of the sphere $S^2$, which is equivalent to the drawing on the plane due to the homeomorphism between the plane and the sphere minus one point. 

In \cite{AbregoAFMMM0RV15} {\'A}brego et al. introduce bishellable drawings.
\begin{definition}[Bishellability \cite{AbregoAFMMM0RV15}] \label{def:bishell}
	For a non-negative integer $s$, a drawing $D$ of $K_n$ is $s$-bishellable if there exist sequences
	$a_0, a_1, \ldots, a_s$ and $b_s, b_{s-1}, \ldots,$ $ b_1 , b_0$, each sequence consisting of distinct vertices of $K_n$, so that with
	respect to a reference face $F$:
	\begin{enumerate}[leftmargin=10mm, label=(\roman*)]
		\item For each $i \in\{ 0, \ldots , s\}$, the vertex $a_i$ is incident to the face of \\$D - \{a_0, a_1 , \ldots , a_{i-1}\}$ that
		contains $F$,
		\item for each $i \in\{ 0, \ldots , s\}$, the vertex $b_i$ is incident to the face of \\$D - \{b_0, b_1 , \ldots , b_{i-1}\}$ that
		contains $F$, and
		\item for each $i \in\{ 0, \ldots , s\}$, the set $\{a_0, a_1 , \ldots a_i\} \cap \{b_{s-i}, b_{s-i-1} , \ldots , b_0\}$ \\is empty.
	\end{enumerate}
\end{definition}
The class of bishellable drawings contains all drawings that are $(\lfloor\frac{n}{2}\rfloor-2)$-bishellable. 
In order to show that if a drawing $D$ is $(\lfloor\frac{n}{2}\rfloor-2)$-bishellable, the Harary-Hill conjecture holds for $D$, {\'A}brego et al. use the notion of $k$-edges. 
The origins of $k$-edges lie in computational geometry and problems over $n$-point set, especially problems on halving lines and $k$-set \cite{abrego2012k}.
An early definition in the geometric setting goes back to Erd\H{o}s et al \cite{erdos1973dissection}.
Given a set $P$ of $n$ points in general position in the plane,
the authors add a directed edge $e=(p_i,p_j)$ between the two distinct points $p_i$ and $p_j$, and consider the continuation as line that separates the plane into the left and right half plane. There is a (possibly empty) point set $P_L\subseteq P$ on the left side of $e$, i.e. left half plane. Erd\H{o}s et al. assign $k:= \min(|P_L|, |P\setminus P_L|)$ to $e$.
Later, the name $k$-edge emerged and Lov{\'a}sz et al. \cite{lovasz2004convex} used $k$-edges for determining a lower bound on the crossing number of rectilinear graph drawings. Finally, 
{\'A}brego et al. \cite{Abrego:2012:CNK:2261250.2261310} extended the concept of $k$-edges from rectilinear to topological graph drawings. 

Every edge in a good drawing $D$ of $K_n$ is a $k$-edge with $k\in\{0,\ldots,\lfloor\frac{n}{2}\rfloor-1\}$. Let $D$ be on the surface of the sphere $S^2$, and $e=uv$ be an edge in $D$ and $F\in\mathcal{F}(D)$ be an arbitrary but fixed face; we call $F$ the \emph{reference face}.
Together with any vertex $w\in V\setminus\{u,v\}$, the edge $e$ forms a triangle $uvw$ and hence a closed curve that separates the surface of the sphere into two parts.
For an arbitrary but fixed orientation of $e$ one can distinguish between the left part and the right part of the separated surface. 
If $F$ lies in the left part of the surface, we say the triangle has orientation $+$ else it has orientation $-$. 
For $e$ there are $n-2$ possible triangles in total, of which $0\leq i\leq n-2$ triangles have orientation $+$ (or $-$) and $n-2-i$ triangles have orientation $-$ (or $+$ respectively).
We define $k:=\min(i,n-2-i)$ and say $e$ is an \emph{$k$-edge} with respect to the reference face $F$ and its \emph{$k$-value} equals $k$ with respect to $F$. 
{\'A}brego et al. \cite{Abrego:2012:CNK:2261250.2261310} show that the crossing number of a drawing is expressible in terms of the number of $k$-edges for $0\leq k\leq \lfloor\frac{n}{2}\rfloor-1$ with respect to the reference face. 
The following definition of the \emph{cumulated} number of $k$-edges is
helpful in determining the lower bound of the crossing number.
%
%
\begin{definition}[Cumulated $k$-edges \cite{Abrego:2012:CNK:2261250.2261310}]
	Let $D$ be good drawing and $E_{k}(D)$ be the number of $k$-edges in $D$ with respect to a reference face $F\in \mathcal{F}(D)$ and for $k\in\{0,\ldots, \lfloor\frac{n}{2}\rfloor-1\}$.
	We call
	\vspace{-3mm}
	\begin{align*}
		E_{\leq\leq k}(D) := \sum_{i=0}^{k}(k+1-i)E_{i}(D) 
	\end{align*}
	the cumulated number of $k$-edges with respect to $F$. 
\end{definition}
We also write \emph{cumulated $k$-edges} or \emph{cumulated $k$-value} instead of cumulated number of $k$-edges. 
Lower bounds on $E_{\leq\leq k}(D)$ for $0\leq k\leq \lfloor \frac{n}{2} \rfloor -2$ translate directly into a lower bound for $cr(D)$. 
\begin{lemma}\label{lemma:doublecumubound}\emph{\cite{Abrego:2012:CNK:2261250.2261310}}
	Let $D$ be a good drawing of $K_n$ and $F\in\mathcal{F}(D)$. 
	If $E_{\leq\leq k}(D) \geq 3{k+3 \choose 3}$ for all $0\leq k\leq \lfloor \frac{n}{2} \rfloor -2$ with respect to $F$, then $cr(D)\geq H(n)$.
	\mEndProof
\end{lemma}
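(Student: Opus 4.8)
The plan is to turn the hypothesised lower bounds on the cumulated $k$-edges into a lower bound on $cr(D)$ by feeding them into the exact expression of $cr(D)$ in terms of $k$-edges that {\'A}brego et al.\ established (and that is recalled just before the lemma). Concretely, I would start from an identity $cr(D)=\sum_{k=0}^{\lfloor n/2\rfloor-1}\gamma_{n,k}\,E_k(D)-c_n$, whose coefficients $\gamma_{n,k}$ and additive constant $c_n$ depend only on $n$ and $k$, not on the particular drawing. The aim is to rewrite the right-hand side as a \emph{nonnegative} linear combination of the quantities $E_{\leq\leq k}(D)$ for $0\leq k\leq M:=\lfloor n/2\rfloor-2$ (plus a drawing-independent constant), because then the assumption $E_{\leq\leq k}(D)\geq 3\binom{k+3}{3}$ can be inserted term by term.

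First I would eliminate the top term: since every edge is a $k$-edge for exactly one value of $k$, the global constraint $\sum_{k=0}^{\lfloor n/2\rfloor-1}E_k(D)=\binom{n}{2}$ lets me remove $E_{\lfloor n/2\rfloor-1}(D)$ and express $cr(D)$ through $E_0(D),\dots,E_M(D)$ alone, at the cost of adjusting the constant. Next I would apply summation by parts twice. The purpose of the \emph{double} cumulation is precisely that $E_{\leq\leq k}=\sum_{j=0}^{k}E_{\leq j}=\sum_{j=0}^{k}\sum_{i=0}^{j}E_i$, so two Abel transforms convert a combination $\sum_k\gamma'_{n,k}E_k(D)$ whose finite differences have the appropriate signs into a combination $\sum_{k=0}^{M}\beta_{n,k}E_{\leq\leq k}(D)$ with $\beta_{n,k}\geq 0$. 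Verifying that $\beta_{n,k}\geq 0$ holds on exactly the range $0\leq k\leq M$, and that the leftover boundary contributions collapse into the constant term, is the technical heart of the argument; I expect the cut-off $k\leq\lfloor n/2\rfloor-2$ to emerge naturally as the exact range on which these coefficients stay nonnegative, and the $k=\lfloor n/2\rfloor-1$ boundary and the parity of $n$ to be the fiddly points.

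To finish, rather than evaluating the constant directly, I would compare with an optimal drawing. Let $D^{*}$ be a cylindrical (Hill) drawing, which attains $cr(D^{*})=H(n)$ and, in the relevant range, $E_{\leq\leq k}(D^{*})=3\binom{k+3}{3}$; the computation $\sum_{i=0}^{k}(k+1-i)\,3(i+1)=\tfrac12(k+1)(k+2)(k+3)=3\binom{k+3}{3}$ confirms these are exactly the target values. Since the identity is the same linear functional for every drawing of $K_n$ with respect to its reference face, subtracting the two instances gives
\begin{align*}
cr(D)-H(n)=\sum_{k=0}^{M}\beta_{n,k}\Big(E_{\leq\leq k}(D)-3\binom{k+3}{3}\Big),
\end{align*}
and each summand is nonnegative by $\beta_{n,k}\geq 0$ together with the hypothesis, so $cr(D)\geq H(n)$. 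This comparison framing sidesteps any explicit evaluation of the constant; if a self-contained computation is preferred instead, the constant can be pinned down using the hockey-stick identity $\sum_{k=0}^{M}\binom{k+3}{3}=\binom{M+4}{4}$. The only genuinely non-routine step is the double summation by parts and the accompanying sign analysis of $\beta_{n,k}$; everything after it is a termwise application of the assumed bounds.
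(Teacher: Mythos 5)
The paper does not prove this lemma at all: it is quoted verbatim from \'Abrego et al.\ \cite{Abrego:2012:CNK:2261250.2261310} and used as a black box, so there is no in-paper proof to compare against. Your strategy is in fact the one used in that reference: there is a drawing-independent linear identity expressing $cr(D)$ through the doubly cumulated $k$-edge counts, namely
\begin{align*}
cr(D) \;=\; 2\sum_{k=0}^{\lfloor n/2\rfloor-2}E_{\leq\leq k}(D)\;-\;\frac{1}{2}\binom{n}{2}\Big\lfloor\frac{n-2}{2}\Big\rfloor\;-\;\frac{1+(-1)^{n}}{2}\,E_{\leq\leq \lfloor n/2\rfloor-2}(D)\text{,}
\end{align*}
so the coefficients you call $\beta_{n,k}$ are $2$ for $k<\lfloor n/2\rfloor-2$ and $2$ or $1$ (for $n$ odd or even) at $k=\lfloor n/2\rfloor-2$; all nonnegative, and inserting $E_{\leq\leq k}(D)\geq 3\binom{k+3}{3}$ termwise together with the hockey-stick identity yields exactly $H(n)$ (one checks $n=5,6,7$ readily). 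So the outline is sound and lands on the right proof.

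The gap is that your proposal defers precisely the step that carries all the content. You postulate an identity $cr(D)=\sum_k\gamma_{n,k}E_k(D)-c_n$ and then say the double Abel summation ``should'' produce nonnegative $\beta_{n,k}$ on exactly the range $0\leq k\leq\lfloor n/2\rfloor-2$; but without exhibiting the $\gamma_{n,k}$ (which for topological, as opposed to rectilinear, drawings is itself the nontrivial theorem of \'Abrego et al., proved by a counting argument over $4$-tuples of vertices and the orientations of their triangles) the sign analysis cannot even be started, and nothing in your write-up rules out a negative coefficient. A second, smaller gap: your comparison argument assumes the cylindrical drawing satisfies $E_{\leq\leq k}(D^{*})=3\binom{k+3}{3}$ exactly, which requires computing that $E_i(D^{*})=3(i+1)$ for $i\leq\lfloor n/2\rfloor-2$ with respect to a suitable reference face; you verify only the ensuing arithmetic, not the $k$-edge distribution itself. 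Your fallback of evaluating the constant directly via $\sum_{k=0}^{M}\binom{k+3}{3}=\binom{M+4}{4}$ is the cleaner route and avoids this second issue entirely, but it still presupposes the explicit identity above.
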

%
If a vertex $v$ is incident to the reference face, the edges incident to $v$ have a predetermined distribution of $k$-values.
\begin{lemma}\label{lemma:vatf}\emph{\cite{Abrego:2012:CNK:2261250.2261310}}
	Let $D$ be a good drawing of $K_n$, $F\in \mathcal{F}(D)$ and $v \in V$ be a vertex incident to $F$.
	With respect to $F$, vertex $v$ is incident to two $i$-edges for $0\leq i \leq \lfloor \frac{n}{2}\rfloor -2$.
	Furthermore, if we label the edges incident to $v$ counter clockwise with $e_0,\ldots,e_{n-2}$ such that $e_0$ and $e_{n-2}$ are incident to the face $F$, then $e_i$ is a $k$-edge with $k=\min(i,n-2-i)$ for $0\leq i \leq n-2$.
	\mEndProof
\end{lemma}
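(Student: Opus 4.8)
The plan is to compute the orientation of every triangle through $v$ by a purely local analysis at $v$ and then count. Since $K_n$ is complete, $v$ has degree $n-1$, and the edges $e_0,\dots,e_{n-2}$ labelled counterclockwise around $v$ have other endpoints $u_0,\dots,u_{n-2}$ that are exactly the $n-1$ remaining vertices. Fix one incident edge $e_i=vu_i$. Its $k$-value is determined by the $n-2$ triangles $vu_iw$ with $w\in V\setminus\{v,u_i\}$, and these third vertices are precisely the $u_j$ with $j\ne i$. Hence it suffices to determine, for each $j\ne i$, the orientation of the triangle $T_{ij}:=vu_iu_j$, whose boundary is the simple closed curve $e_i\cup(u_iu_j)\cup e_j$ (simple because no two of these three mutually adjacent edges cross in a good drawing).

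First I would locate $F$ locally at $v$. Because $v$ is incident to $F$ and $e_0,e_{n-2}$ are the two edges of $v$ bounding $F$, the face $F$ occupies the angular sector at $v$ between $e_{n-2}$ and $e_0$, the unique sector containing no other incident edge. The curve $\partial T_{ij}$ meets $v$ along the two edges $e_i$ and $e_j$, which split a small disc around $v$ into two angular sectors; since $\partial T_{ij}$ is a Jordan curve on the sphere, crossing either $e_i$ or $e_j$ switches sides, so these two sectors lie on the two different sides of $\partial T_{ij}$, and the side containing $F$ is decided by which sector the $F$-sector belongs to. Orienting $e_i$ from $v$ to $u_i$, the left side of $e_i$ near $v$ is the sector immediately counterclockwise from $e_i$ (towards $e_{i+1}$) and the right side is the one towards $e_{i-1}$.

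The key step is then the case distinction on the position of $j$. If $j>i$, the edge $e_j$ is reached from $e_i$ by turning counterclockwise, so the two sectors cut out by $e_i$ and $e_j$ are the one containing $e_{i+1},\dots,e_{j-1}$ and the one containing $e_{j+1},\dots,e_{n-2},e_0,\dots,e_{i-1}$ together with the $F$-sector; the latter is the right side of $e_i$, so $F$ lies on the right and $T_{ij}$ has orientation $-$. Symmetrically, if $j<i$ the $F$-sector falls on the left of $e_i$ and $T_{ij}$ has orientation $+$; the boundary labels $i=0$ and $i=n-2$, where $e_i$ itself bounds $F$, are covered by the same argument. Consequently exactly the $i$ indices $j\in\{0,\dots,i-1\}$ give orientation $+$ and the $n-2-i$ indices $j\in\{i+1,\dots,n-2\}$ give orientation $-$, so by the definition of the $k$-value $e_i$ is a $k$-edge with $k=\min(i,n-2-i)$, proving the second assertion.

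For the first assertion I would read off the multiset of $k$-values $\{\min(i,n-2-i):0\le i\le n-2\}$: for each $i$ with $0\le i\le\lfloor n/2\rfloor-2$ the two edges $e_i$ and $e_{n-2-i}$ are distinct, because $i<\tfrac{n-2}{2}$ throughout this range, and both have $k$-value $i$, while no other incident edge does; hence $v$ is incident to exactly two $i$-edges. The main obstacle is the local-to-global step, namely rigorously justifying that the side of the Jordan curve $\partial T_{ij}$ containing the reference face $F$ is already determined by the angular sector at $v$ that contains $F$; once the counterclockwise and left/right orientation conventions are fixed this is immediate, and the remaining counting is routine.
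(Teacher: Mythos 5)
Your proof is correct. Note, however, that the paper does not prove Lemma~\ref{lemma:vatf} at all: it is quoted from \cite{Abrego:2012:CNK:2261250.2261310} and stated with the end-of-proof mark immediately after, so there is no in-paper argument to compare against. Your argument is the standard one for this fact: reduce the orientation of each triangle $vu_iu_j$ to a local computation in the rotation at $v$, using that the three pairwise adjacent edges of the triangle form a Jordan curve, that the two angular sectors it cuts at $v$ lie in the two different components of its complement, and that the face $F$ (being connected and disjoint from the drawing) lies in the component determined by the sector between $e_{n-2}$ and $e_0$. The resulting count of $i$ positively and $n-2-i$ negatively oriented triangles for $e_i$, and the pairing of $e_i$ with $e_{n-2-i}$ for the "two $i$-edges" claim, are exactly right; the one step you flag as delicate (the local-to-global passage for $F$) is adequately justified by the connectedness of $F$ and the Jordan curve theorem.
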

Examples for lemma \ref{lemma:vatf} are the vertices incident to $F$ in figure \ref{fig:example}.
We denote the cumulated $k$-values for edges incident to a vertex $v$ in a drawing $D$ with $E_{\leq\leq k}(D, v)$.
Due to lemma \ref{lemma:vatf} it follows that $E_{\leq\leq k}(D,v) = \sum_{i=0}^{k}(k+1-i)\cdot2 = 2{k+2 \choose 2}$.

Next, we introduce \emph{invariant} $k$-edges. Consider removing a vertex $v \in V$ from a good drawing $D$ of $K_n$, resulting in the subdrawing $D-v$.  By deleting $v$ and its incident edges every remaining edge loses one triangle, i.e. for an edge $uw\in E$ there are only $(n-3)$ triangles $uwx$ with $x\in V\setminus\{u,v\}$ (instead of the $(n-2)$ triangles in drawing $D$). 
The $k$-value of any edge $e\in E$ is defined as the minimum count of $+$ or $-$ oriented triangles that contain $e$. If the lost triangle had the same orientation as the minority of triangles, the $k$-value of $e$ is reduced by one else it stays the same.
Therefore, every $k$-edge in $D$ with respect to $F\in \mathcal{F}(D)$ is either a $k$-edge or a $(k-1)$-edge in the subdrawing $D-v$ with respect to $F'\in \mathcal{F}(D-v)$ and $F\subseteq F'$.
We call an edge $e$ \emph{invariant} if $e$ has the same $k$-value with respect to $F$ in $D$ as for $F'$ in $D'$. We denote the number of cumulated invariant $k$-edges between $D$ and $D'$ (with respect to $F$ and $F'$ respectively) with $I_{\leq k}(D, D')$, i.e. $I_{\leq k}(D, D')$ equals the sum of the number of invariant $i$-edges for $0\leq i \leq k$.
\begin{figure}
	\centering
	\includegraphics[width=0.93\linewidth]{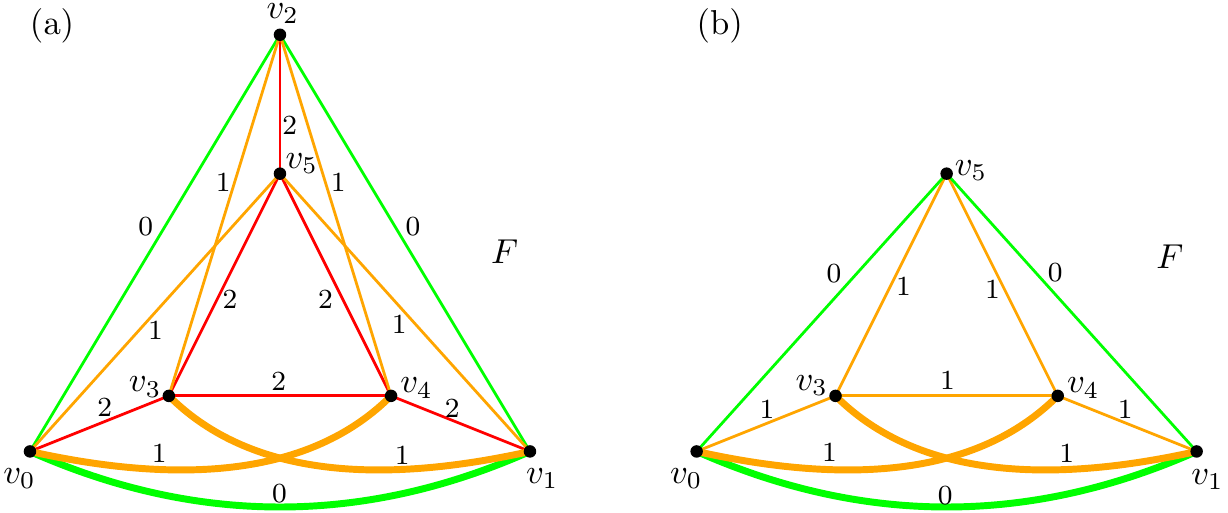} 
	\caption{Example: (a) shows a crossing optimal drawing $D$ of $K_6$ with the $k$-values at the edges. (b) shows the subdrawing $D-v_2$ and its $k$-values. The fat highlighted edges $v_0v_1$, $v_0v_4$ and $v_1v_3$ are invariant and keep their $k$-values. The reference face is the outer face $F$.}
	\label{fig:example}
\end{figure}

For a good drawing $D$ of $K_n$, we are able to express the value of cumulated $k$-edges with respect to a reference face $F\in\mathcal{F}(D)$ recursively by adding up the cumulated $(k-1)$-value of a subdrawing $D-v$, the contribution of the edges incident to $v$ and the number of \emph{invariant edges} between $D$ and $D-v$. 
\begin{lemma}\label{lemma:recursive_double_cumu}\emph{\cite{AbregoAFMMM0RV15}}
	Let $D$ be a good drawing of $K_n$, $v\in V$ and $F\in \mathcal{F}(D)$. With respect to the reference face $F$, we have
	\begin{align*}
		E_{\leq\leq k}(D) &= E_{\leq\leq k-1}(D-v) + E_{\leq\leq k}(D, v) + I_{\leq k}(D,D-v) \textrm{.}
	\end{align*}
	\mEndProof
\end{lemma}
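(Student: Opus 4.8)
The plan is to prove the identity by sorting the edges of $D$ into three classes according to how the deletion of $v$ affects their $k$-value, and then comparing the two sides coefficient by coefficient. Every edge of $D$ is exactly one of the following: (i) incident to $v$, hence absent from $D-v$; (ii) non-incident and \emph{invariant}, keeping its $k$-value; or (iii) non-incident with its $k$-value decreasing by one. As recalled in the paragraph preceding the lemma, a non-incident edge loses exactly one triangle in passing to $D-v$, so (ii) and (iii) are the only possibilities. For each $i$ write $\alpha_i$, $\beta_i$, $\gamma_i$ for the number of $i$-edges of $D$ in classes (i), (ii), (iii) respectively, so that $E_i(D) = \alpha_i + \beta_i + \gamma_i$.

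First I would rewrite each of the four quantities in the statement in terms of these counts. Directly from the definition, $E_{\leq\leq k}(D) = \sum_{i=0}^{k}(k+1-i)(\alpha_i+\beta_i+\gamma_i)$, while $E_{\leq\leq k}(D,v) = \sum_{i=0}^{k}(k+1-i)\alpha_i$ because $E_{\leq\leq k}(D,v)$ is the cumulated count restricted to the $v$-incident edges. An invariant $i$-edge is an $i$-edge of both $D$ and $D-v$, so there are $\beta_i$ of them and $I_{\leq k}(D,D-v) = \sum_{i=0}^{k}\beta_i$. The key bookkeeping observation is to determine which edges of $D$ become the $j$-edges of $D-v$: precisely the invariant $j$-edges (class (ii), count $\beta_j$) together with the decreasing $(j+1)$-edges (class (iii), count $\gamma_{j+1}$). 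Hence $E_j(D-v) = \beta_j + \gamma_{j+1}$ and $E_{\leq\leq k-1}(D-v) = \sum_{i=0}^{k-1}(k-i)(\beta_i+\gamma_{i+1})$.

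With these substitutions the claim becomes a purely algebraic identity. The class-(i) part of $E_{\leq\leq k}(D)$ is exactly $E_{\leq\leq k}(D,v)$, so it suffices to check that $\sum_{i=0}^{k}(k+1-i)(\beta_i+\gamma_i) = \sum_{i=0}^{k-1}(k-i)(\beta_i+\gamma_{i+1}) + \sum_{i=0}^{k}\beta_i$. For the $\beta$-terms the coefficient on the right is $(k-i)+1 = k+1-i$ for $i \le k-1$ and $1 = k+1-k$ for $i=k$, matching the left. For the $\gamma$-terms, reindexing $j=i+1$ turns $\sum_{i=0}^{k-1}(k-i)\gamma_{i+1}$ into $\sum_{j=1}^{k}(k+1-j)\gamma_j$, which coincides with the left-hand sum $\sum_{i=0}^{k}(k+1-i)\gamma_i$ apart from the omitted term $(k+1)\gamma_0$.

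The hard part is therefore to show that this leftover vanishes, i.e. that $\gamma_0 = 0$, which is the single point where the geometry enters. A $0$-edge has all of its $n-2$ triangles of one orientation with respect to $F$, so its minority orientation is empty; the unique triangle lost when $v$ is removed must then have the majority orientation, and by the rule recalled before the lemma the $k$-value cannot drop. Thus every $0$-edge is invariant, $\gamma_0 = 0$, and the two sides agree, which proves the recursion.
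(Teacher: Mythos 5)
Your proof is correct. Note that the paper states this lemma only with a citation to \cite{AbregoAFMMM0RV15} and provides no proof of its own, so there is no internal argument to compare against; your three-way classification of the edges of $D$ (incident to $v$, invariant, decreasing), the identification $E_j(D-v)=\beta_j+\gamma_{j+1}$, and the coefficient-by-coefficient comparison constitute the standard double-counting argument behind the cited result, and you correctly isolate and settle the one genuinely geometric point, namely that a $0$-edge has an empty minority of triangle orientations and hence cannot decrease, so $\gamma_0=0$ and the leftover term vanishes.
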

{\'A}brego et al. \cite{AbregoAFMMM0RV15} use an inductive proof over $k$ to show that for a bishellable drawing $D$ of $K_n$  $E_{\leq\leq k}(D)\geq 3{k+3 \choose 3}$ for all $k\in\{0,\ldots,\lfloor\frac{n}{2}\rfloor-2\}$.
Together with lemma \ref{lemma:doublecumubound} follows $cr(D)\geq H(n)$. 

Here, we also use lemma \ref{lemma:recursive_double_cumu} and show that for a seq-shellable drawing $D$ of $K_n$ the lower bounds on $E_{\leq\leq k}(D)$ hold for all $k\in\{0,\ldots,\lfloor\frac{n}{2}\rfloor-2\}$.
But in contrast to \cite{AbregoAFMMM0RV15}, we use a more general and at the same time easy to follow approach to guarantee lower bounds on the number of invariant edges $I_{\leq k}(D,D-v)$ for $0\leq k \leq \lfloor\frac{n}{2}\rfloor-2$.
\section{Seq-Shellability}\label{sec:seqsehll}
Before we proceed with the definition of seq-shellability, we introduce simple sequences.

\subsection{Simple sequences}
We use simple sequences to guarantee a lower bound of the number of invariant edges in the recursive formulation of the cumulated $k$-value.

%
\begin{definition}[Simple sequence] 
	Let $D$ be a good drawing of $K_n$, $F\in\mathcal{F}(D)$ and $v\in V$ with $v$ incident to $F$. 
	Furthermore, let $S_v = (u_0,\ldots,u_k)$ with $u_i\in V\setminus\{v\}$ be a sequence of distinct vertices.
	If $u_0$ is incident to $F$ and vertex $u_i$ is incident to a face containing $F$ in subdrawing $D-\{u_0,\ldots, u_{i-1}\}$ for all $1\leq i \leq k$, then we call $S_v$ simple sequence of $v$.
\end{definition}
%
Before we continue with a result for lower bounds on the number of invariant edges using simple sequences, we need the following lemma.
\begin{lemma}\label{lemma:a0b0_at_F}
	Let $D$ be a good drawing of $K_n$, $F\in \mathcal{F}(D)$ and $u,v\in V$ with $u$ and $v$ incident to $F$.
	The edge $uv$ touches $F$ either over its full length or not at all (except its endpoints).
\end{lemma}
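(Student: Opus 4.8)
The plan is to prove the equivalent statement that the set of interior points of $uv$ lying on $\partial F$ is either all of the open edge or empty; since $u,v\in\partial F$ by hypothesis, this yields exactly the two cases ``over its full length'' and ``not at all except its endpoints''. First I would record the local behaviour: along any sub-arc of $uv$ that contains no crossing, the two faces incident to $uv$ are constant, so the property ``$uv$ borders $F$ here'' is constant on each crossing-free sub-arc. Consequently the set $J:=\{p\in\mathrm{int}(uv):p\in\partial F\}$ is relatively closed in $\mathrm{int}(uv)$ (as $\partial F$ is closed) and is a union of whole crossing-free sub-arcs, so its relative boundary inside $\mathrm{int}(uv)$ can consist only of crossing points. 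Hence the whole lemma reduces to a single local claim: at every crossing $q$ of $uv$, the two sub-arcs of $uv$ meeting at $q$ agree on whether they border $F$; in other words, $uv$ cannot border $F$ on exactly one side of a crossing.

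For the tool I would use a Jordan-curve argument. Because $u$ and $v$ are both incident to the face $F$ and $F$ is connected, I can join them by a simple arc $\beta$ whose interior lies in $F$; as $F$ is a face, $\beta\setminus\{u,v\}\subseteq \mathbb{R}^2\setminus D$, so $\beta$ meets the drawing only at $u,v$. Then $C:=uv\cup\beta$ is a simple closed curve and, viewing the drawing on $S^2$, it bounds two discs. Now let $e'$ be any edge crossing $uv$. Since $D$ is good, adjacent edges do not cross, so $e'$ shares no endpoint with $uv$; moreover $e'\subseteq D$ is disjoint from $\beta\setminus\{u,v\}\subseteq F$, and $e'$ crosses $uv$ at most once. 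Therefore $e'$ meets $C$ in exactly one point, and by the parity form of the Jordan curve theorem its two endpoints lie on opposite sides of $C$.

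It remains to combine these two ingredients. Assume for contradiction that $uv$ borders $F$ on the $v$-side of a crossing $q$ but not on the $u$-side; inspecting the four faces meeting at $q$, the incidence to $F$ switches precisely across the crossing edge $e'$, so $e'$ is itself incident to $F$ near $q$, and following the boundary of $F$ from $q$ along $e'$ exhibits a further vertex $w$ incident to $F$. The endpoints of $e'$ lie on opposite sides of $C$, while $F$ itself straddles $C$ along $\beta$, and the goal is to turn this configuration into a contradiction. I expect this last step to be the main obstacle: the separation of the endpoints of $e'$ is not by itself contradictory, since $F$ legitimately occupies both sides of $C$, so I anticipate having to invoke the completeness of $K_n$ — using the edges $wu$ and $wv$ joining the newly located boundary vertex $w$ to $u$ and $v$ — together with the good-drawing conditions (at most one crossing per pair, no adjacent crossings) to force an impossible crossing pattern, possibly after choosing $q$ to be the transition crossing closest to $v$ along $uv$. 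Once no such transition can occur, the reduction in the first paragraph yields the dichotomy: an edge bordering $F$ at one interior point is uncrossed and borders $F$ throughout, while an edge bordering $F$ at no interior point meets $\overline{F}$ only at $u$ and $v$.
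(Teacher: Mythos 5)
Your set-up is sound and close in spirit to the paper's: you reduce the lemma to showing that the property ``borders $F$'' cannot switch at a crossing of $uv$, and you build the Jordan curve $C=uv\cup\beta$ with $\beta$ running through $F$, concluding that any edge $e'$ crossing $uv$ has its endpoints on opposite sides of $C$. The paper likewise isolates an edge $xy$ crossing $uv$ with $x$ on the same side of $uv$ as $F$ and $y$ on the other side, after first disposing of the cases where an edge crosses $uv$ twice (excluded by goodness) or where an edge traverses $F$ entirely (excluded since that would separate $u$ from $v$ within $F$). Up to this point the two arguments match.

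The problem is that you stop exactly where the lemma's actual content begins. You write that the separation of the endpoints of $e'$ is ``not by itself contradictory'' and that you ``anticipate having to invoke the completeness of $K_n$ \dots to force an impossible crossing pattern''; that anticipated step is the whole proof, and it is never carried out. The paper supplies it by taking the endpoint $x$ of the crossing edge that lies on the $F$-side of $uv$ and examining the edge $xu$, which exists because the graph is complete: by goodness $xu$ can cross neither $uv$ nor any other edge $uz$ incident to $u$, so it cannot reach $u$ without either an illegal crossing or cutting $v$ off from $F$; the symmetric argument applies at $v$, and this yields the contradiction. Your proposal names the right ingredients but does not assemble them, so no contradiction is actually derived. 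A secondary flaw: your claim that ``following the boundary of $F$ from $q$ along $e'$ exhibits a further vertex $w$ incident to $F$'' is unjustified --- in a good drawing the boundary walk of a face can pass only through crossing points and edge segments, so no vertex beyond $u$ and $v$ need lie on $\partial F$, and the edges $wu$, $wv$ you intend to use may not be available in the form you describe. The paper avoids this by working directly with the endpoints $x,y$ of the crossing edge rather than with a vertex found on $\partial F$.
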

\begin{proof}
	Assume that $D$ a is good drawing of $K_n$ in which the edge $uv$ touches $F$ only partly. 
	We can exclude the case that an edge cuts a part out of $uv$ by crossing it more than once due to the goodness of the drawing (see figure \ref{fig:a0b0_at_F} (a)).
	The case that an edge crosses the whole face $F$ and separates it into two faces is also impossible, because this would contradict that both $u$ and $v$ are incident to $F$.
	Therefore, a vertex $x$ has to be on the same side of $uv$ as $F$ and a vertex $y$ on the other side such that the edge $xy$ crosses $uv$.
	But the edge $xu$ cannot cross any edge $uz$ with $z\in V\setminus\{u\}$ as this would contradict the goodness of $D$
	and $xu$ cannot leave the superface of $x$ without {separating} $v$ from $F$ (see figure \ref{fig:a0b0_at_F} (b) and (c)).
	We have the symmetric case for $v$. Consequently, $uv$ cannot touch $F$ beside its endpoints $u$ and $v$ (see figure \ref{fig:a0b0_at_F} (d)), a contradiction to the assumption.
	\mEndProof
\end{proof}
\begin{figure}
	\centering
	\includegraphics[width=1\linewidth]{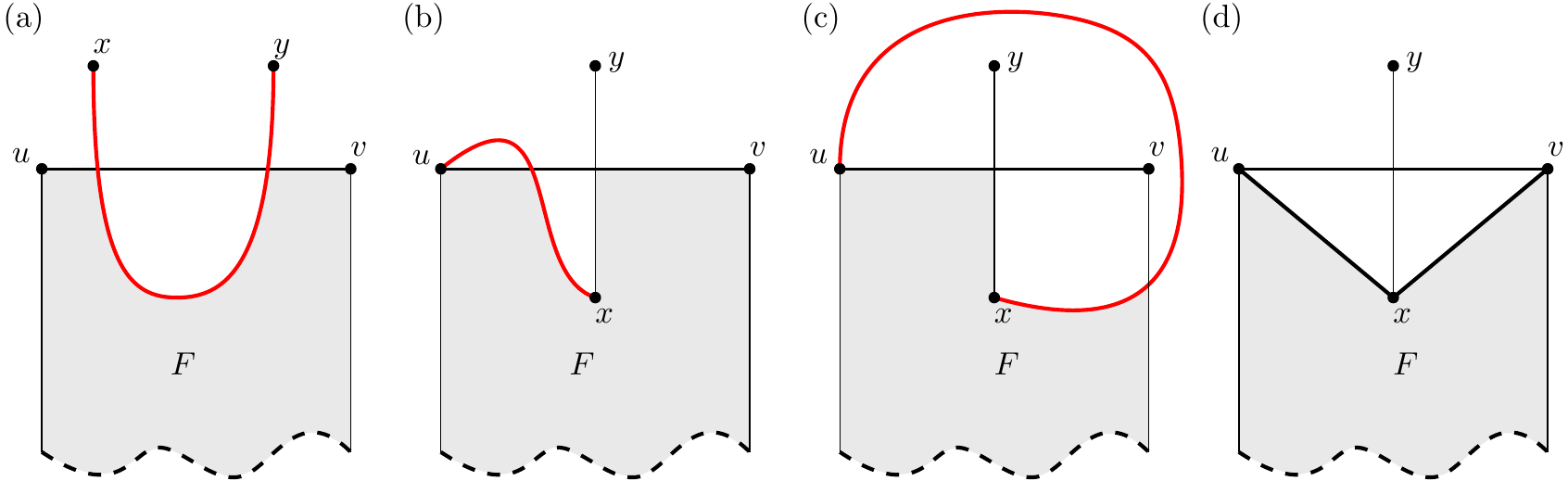} 
	\caption[Possible crossings of edge $uv$]{
		(a) Due to the goodness of $D$ an edge cannot \emph{cut} a part out of the edge $uv$.
		(b) The edges $uv$ and $ux$ cross, both have vertex $u$ as endpoint thus the drawing is not good.
		(c) The drawing is good but vertex $v$ is not incident to the face $F.$ 
		(d) The edge $uv$ is crossed, the drawing is good and both  vertices $u$ and $v$ is are incident to $F$, however $uv$ is not incident to $F$.}
	\label{fig:a0b0_at_F}
\end{figure}
\begin{corollary}\label{corollary:a0b0_j_edge}
	Let $D$ be a good drawing of $K_n$, $F\in \mathcal{F}(D)$ and $u,v\in V$ with both $u$ and $v$ incident to $F$.
	If and only if $uv$ is a $j$-edge, there are exactly $j$ or $n-2-j$ vertices on the same side of $uv$ as the reference face $F$.
	\mEndProof
\end{corollary}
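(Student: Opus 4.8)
The plan is to reduce the statement to a clean correspondence between the orientation of the triangles $uvw$ and the side of the edge $uv$ on which the third vertex $w$ lies, and then simply count.

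First I would make precise what ``the same side of $uv$ as $F$'' means. Since $u$ and $v$ are both incident to $F$, I can join them by an auxiliary simple arc $\alpha$ running through the interior of $F$; together with the edge $uv$ this yields a closed (Jordan) curve $\Gamma := uv \cup \alpha$ on the sphere. Because $\alpha$ lies in the single face $F$, it meets no vertex and crosses no edge of $D$, so every vertex $w \in V\setminus\{u,v\}$ lies strictly inside or strictly outside $\Gamma$. This partitions the $n-2$ remaining vertices into two sets whose sizes sum to $n-2$, and it is exactly this partition that I will call the two \emph{sides of $uv$}. Lemma~\ref{lemma:a0b0_at_F} guarantees that this is well defined, since $uv$ either bounds $F$ (in which case one side is empty) or avoids $F$ altogether except at its endpoints.

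The heart of the argument is the following correspondence: for every vertex $w$ the orientation of the triangle $uvw$ with respect to $F$ is determined solely by the side of $uv$ on which $w$ lies. To see this I would first observe that the arcs $uw$ and $wv$ cannot cross $\Gamma$ (they avoid the interior of $F$, hence $\alpha$, and by goodness they do not cross the adjacent edge $uv$), so the triangle $uvw$ together with its interior lies entirely on the side of $\Gamma$ containing $w$. As $\alpha\subset F$ splits $F$ into one part on each side of $\Gamma$, the face $F$ reaches into both sides; since $F$ is connected and crosses no edge, and one part of $F$ lies on the side \emph{not} containing $w$, the whole of $F$ lies in the exterior of $uvw$. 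It then remains to track the left/right bookkeeping of the directed edge $u\to v$: orienting $\Gamma$ as $u\to v$ along the edge and back along $\alpha$, the side containing $w$ is adjacent to one fixed side of $u\to v$, and since $F$ sits in the exterior of the triangle this pins down whether $F$ lies in the left part (orientation $+$) or the right part (orientation $-$). Carrying this out shows that all vertices on one side of $uv$ yield $+$-triangles and all vertices on the other side yield $-$-triangles.

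Granting the correspondence, the count is immediate. Writing $a$ and $b=n-2-a$ for the sizes of the two sides, the number of $+$-oriented triangles equals one of them, say $a$, and the number of $-$-oriented triangles equals $b$; by the definition of a $j$-edge we therefore have $j=\min(a,b)$. Consequently the two sides of $uv$ have sizes $j$ and $n-2-j$, so the side on which $F$ lies---indeed either side---carries exactly $j$ or $n-2-j$ vertices, which is one implication. Reading the identity $j=\min(a,b)$ backwards gives the converse: if one side carries $j$ or $n-2-j$ vertices, then $\min(a,b)=j$ and $uv$ is a $j$-edge. I expect the main obstacle to be the correspondence in the third paragraph: the notions of ``left part'' and ``right part'' are purely topological, so locating $F$ relative to the oriented triangle---rather than relying on a straight-line picture---requires careful use of Lemma~\ref{lemma:a0b0_at_F} and of the fact that $\alpha$ lies inside $F$.
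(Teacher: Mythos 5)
Your argument is correct; the paper states this corollary without a written proof, treating it as immediate from Lemma~\ref{lemma:a0b0_at_F}, and your Jordan-curve construction (closing $uv$ into a curve $\Gamma$ through $F$ and showing that the orientation of the triangle $uvw$ depends only on which side of $\Gamma$ contains $w$) is precisely the argument the authors leave implicit. Your observation that $F$ itself meets both sides of $\Gamma$, so that the stated conclusion must be read as holding for either side, is a point of care the paper glosses over rather than a defect in your proof.
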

The following lemma provides a lower bound for the number of invariant edges in the case that $F$ is incident to at least two vertices and we remove one of them.
\begin{lemma}\label{lemma:inv_vwatf}
	Let $D$ be a good drawing of $K_n$, $F\in\mathcal{F}(D)$ and $v,w \in V$ with $v$ and $w$ incident to $F$. If we remove $v$ from $D$, then $w$ is incident to at least $\lfloor \frac{n}{2}\rfloor-1$ invariant edges.
\end{lemma}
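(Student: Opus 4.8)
The plan is to read the edges around $w$ off Lemma~\ref{lemma:vatf} and decide, edge by edge, whether deleting $v$ destroys a majority- or a minority-oriented triangle. Since $w$ is incident to $F$, I label the edges at $w$ counterclockwise as $e_0,\dots,e_{n-2}$ with $e_0$ and $e_{n-2}$ incident to $F$, so that $e_i=wu_i$ is a $\min(i,n-2-i)$-edge. The vertex $v$ is one of the neighbours of $w$, say $v=u_m$; the edge $e_m=wv$ vanishes when $v$ is deleted, and the task is to show that at least $\lfloor\frac{n}{2}\rfloor-1$ of the remaining edges $e_i$, $i\neq m$, are invariant.

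The criterion I would use is the following. Deleting $v$ removes from each edge $e_i$ exactly the triangle $wu_iu_m$, and by definition $e_i$ stays invariant precisely when this triangle carries the majority orientation of $e_i$. The structural fact behind Lemma~\ref{lemma:vatf} is that the two orientation classes of the $n-2$ triangles on $e_i$ are exactly the two radial groups of neighbours around $w$: the $i$ triangles with $u_j$, $j<i$, form one class and the $n-2-i$ triangles with $j>i$ form the other. Hence the majority orientation of $e_i$ is its larger radial group, and $e_i$ is invariant if and only if $u_m$ lies in that larger group. Concretely, $e_i$ is invariant iff $m>i$ when $i<n-2-i$, iff $m<i$ when $i>n-2-i$, while the unique middle edge present for even $n$ (whose two groups are equal) is never invariant.

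It then remains to count the indices $i\neq m$ satisfying this condition. Splitting the range of $i$ at the middle and treating $m\le\lfloor\frac{n-2}{2}\rfloor$ and $m>\lfloor\frac{n-2}{2}\rfloor$ separately, a short computation shows that the number of invariant edges at $w$ equals $\lfloor\frac{n}{2}\rfloor-1+j$ for even $n$ and $\lfloor\frac{n}{2}\rfloor+j$ for odd $n$, where $j=\min(m,n-2-m)$ is precisely the $k$-value of the edge $wv$ (nonnegative, and by Corollary~\ref{corollary:a0b0_j_edge} equal to the number of vertices on the smaller side of $wv$). Since $j\ge 0$, both expressions are at least $\lfloor\frac{n}{2}\rfloor-1$, which proves the claim; for even $n$ equality occurs exactly when $wv$ is a $0$-edge, i.e. $m\in\{0,n-2\}$. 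The only genuinely delicate point is the structural claim used in the second paragraph — that the orientation classes of the triangles on $e_i$ coincide with the radial grouping around $w$ — so I would make sure to extract this explicitly from the proof of Lemma~\ref{lemma:vatf} rather than merely from its counting statement; the remaining case distinctions are routine.
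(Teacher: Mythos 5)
Your proposal is correct and follows essentially the same route as the paper's proof: label the edges at $w$ via Lemma~\ref{lemma:vatf}, observe that $e_i$ is invariant exactly when the deleted triangle $wu_iv$ lies in its majority orientation class, and count using the position of $v$ in the rotation at $w$ (equivalently the $j$-value of $wv$). Your version is more explicit — exact counts and explicit handling of the middle edge for even $n$ — but the underlying argument, including the reliance on the radial-grouping fact behind Lemma~\ref{lemma:vatf}, is the same.
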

\begin{proof}
	We label the edges incident to $w$ counter clockwise with $e_0,\ldots,e_{n-2}$ such that $e_0$ and $e_{n-2}$ are incident to the face $F$, 
	and we label the vertex at the other end of $e_i$ with $u_i$. 
	Furthermore, we orient all edges incident to $w$ as outgoing edges.
	Due to lemma \ref{lemma:vatf} we know that $w$ has two $i$-edges for $0\leq i \leq\lfloor \frac{n}{2}\rfloor -2$. Edge $e_i$ obtains its $i$-value from the minimum of say $+$ oriented triangles and edge $e_{n-2-i}$ obtains its $i$-value from the minimum $-$ oriented triangles (or vice versa).
	Assume that $vw$ is incident to $F$, i.e. $vw$ is a $0$-edge and all triangles $vwu$ for $u\in V\setminus\{v,w\}$ have the same orientation. 
	Consequently, all $e_i$ or all $e_{n-2-i}$ for $0\leq i \leq \lfloor\frac{n}{2}\rfloor-2$ are invariant.
	In the case that $vw$ is not incident to $F$ and is a $j$-edge, there are $j$ triangles $vwu_h$ with $u_h\in V\setminus\{v,w\}$, $0\leq h\leq j-1$ or $n-1-j \leq h\leq n-2$ 
	and $u_h$ is on the same side of $vw$ as $F$ (corollary \ref{corollary:a0b0_j_edge}). 
	This means, each triangle $wu_hv$ is part of the majority of orientations for the $k$-value of edge $wu_h$, therefore removing $v$ does not change the $k$-value and 	
	there are $j$ additional invariant edges incident to $w$ if we remove $v$.
	\mEndProof
\end{proof}
The following lemma provides a lower bound for the number of cumulated invariant $k$-edges if we remove a vertex that has a simple sequence.
%
\begin{lemma}\label{lemma:inv_vwatf_seq}
	Let $D$ be a good drawing of $K_n$, $F\in\mathcal{F}(D)$ and $v\in V$ with $v$ incident to $F$. 
	If $v$ has a simple sequence $S_v = (u_0,\ldots,u_k)$, then
	\begin{align*}
		I_{\leq k}(D,D-v)\geq {k+2\choose 2}
	\end{align*}
	with respect to $F$ and for all $k\in \{0,\ldots,\lfloor\frac{n}{2}\rfloor-2\}$.
\end{lemma}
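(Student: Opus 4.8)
The plan is to induct on $k$, i.e.\ on the length of the simple sequence $S_v=(u_0,\ldots,u_k)$, removing the front vertex $u_0$ at each step. For the base case $k=0$ I only need a single invariant $0$-edge. Since $u_0$ and $v$ are both incident to $F$, Lemma~\ref{lemma:inv_vwatf} applies, and moreover any $0$-edge at $u_0$ automatically survives the deletion of $v$: its minority count is already $0$ and cannot decrease, so it stays a $0$-edge. Hence $I_{\leq 0}(D,D-v)\geq 1=\binom{2}{2}$.

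For the inductive step I first note that $(u_1,\ldots,u_k)$ is a simple sequence of $v$ in the subdrawing $D-u_0$: deleting $u_0$ only merges faces, so $v$ and each $u_i$ stay incident to the face containing $F$ in $(D-u_0)-\{u_1,\ldots,u_{i-1}\}=D-\{u_0,\ldots,u_{i-1}\}$. The induction hypothesis then gives $I_{\leq k-1}(D-u_0,(D-u_0)-v)\geq\binom{k+1}{2}$. Because $\binom{k+2}{2}=\binom{k+1}{2}+(k+1)$, it is enough to exhibit $k+1$ additional invariant edges of $k$-value at most $k$, disjoint from those counted by the hypothesis. These I obtain from $u_0$: applying Lemma~\ref{lemma:inv_vwatf} in $D$ with $w=u_0$ and keeping track of $k$-values through Lemma~\ref{lemma:vatf}, a short orientation count shows that one of the two low-$k$-value blocks around $u_0$, either $e_0,\ldots,e_k$ or $e_{n-2-k},\ldots,e_{n-2}$, is entirely invariant; the range $k\leq\lfloor n/2\rfloor-2$ is exactly what forces the neighbour $v$ to lie beyond the far end of one of these blocks, so that block is undisturbed. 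This yields one invariant edge of each $k$-value $0,1,\ldots,k$, all incident to $u_0$.

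The heart of the argument is then a transfer step: every edge $e=xy$ that is invariant in $D-u_0$ (upon deleting $v$) with $k$-value at most $k-1$ is still invariant in $D$ (upon deleting $v$) with $k$-value at most $k$. Passing from $D-u_0$ back to $D$ restores to $e$ exactly the one triangle $xyu_0$, so the minority count of $e$ grows by at most one and its $k$-value rises by at most one, staying within the admissible range. The decisive point is that this single triangle cannot reverse the majority orientation of $e$: an invariant edge of $k$-value at most $k-1\leq\lfloor n/2\rfloor-3$ has, in $D-u_0$, a majority margin of at least $(n-3)-2(\lfloor n/2\rfloor-3)\geq 3$, so after adding one triangle the same orientation remains strictly in the majority and the lost triangle $xyv$ is still majority, i.e.\ $e$ remains invariant. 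The edges delivered by the hypothesis avoid $u_0$, whereas the $k+1$ edges of the previous paragraph are incident to $u_0$, so the two families are disjoint and $I_{\leq k}(D,D-v)\geq(k+1)+\binom{k+1}{2}=\binom{k+2}{2}$.

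I expect the transfer step to be the main obstacle, since it is the only place where invariance must be compared across two different drawings, $D$ and $D-u_0$, rather than read off inside a single one. It is precisely here that the hypothesis $k\leq\lfloor n/2\rfloor-2$ is indispensable: without the resulting margin of at least two, re-inserting $u_0$ could turn a near-halving edge into a balanced one and thereby destroy the invariance on which the count relies.
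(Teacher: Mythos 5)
Your proof is correct and follows essentially the same route as the paper's: both extract $k+1-i$ invariant edges incident to $u_i$ by applying Lemma~\ref{lemma:inv_vwatf} in the subdrawing $D-\{u_0,\ldots,u_{i-1}\}$ and sum these counts to $\binom{k+2}{2}$; you merely organize the iteration as an induction on $k$ instead of a direct sum. Your explicit margin argument for transferring invariance from $D-u_0$ back to $D$ makes rigorous a step the paper leaves implicit, and the computation (majority margin at least $3$ when the $k$-value is at most $k-1\leq\lfloor n/2\rfloor-3$) checks out.
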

\begin{proof}
	Let $k\in \{0,\ldots,\lfloor\frac{n}{2}\rfloor-2\}$.
	We know that $u_0$ has at least $k+1\leq \lfloor \frac{n}{2}\rfloor-1$ invariant edges with respect to $F$ and removing $v$.
	After removing vertex $u_0$ from drawing $D$, vertices $v$ and $u_1$ are incident to $F$. 
	Since $k\leq\lfloor\frac{n}{2}\rfloor-2 \leq \lfloor\frac{n-1}{2}\rfloor-1$ and 
	$u_0$ has an edge to $u_1$ in drawing $D$, vertex $u_1$ has at least $k$ invariant edges with respect to $F$ and removing $v$ in drawing $D$.
	In general, after removing vertices $u_0,\ldots,u_{i-1}$ from drawing $D$, vertices $v$ and $u_i$ are incident to $F$.
	For $u\in\{u_0,\ldots,u_{i-1}\}$ the edge $uu_i$ in drawing $D$ may be invariant or non-invariant, and we have $k+1-i\leq\lfloor\frac{n}{2}\rfloor-1-i \leq \lfloor\frac{n-i}{2}\rfloor-1$.
	Therefore, $u_i$ has at least $k-i+1$ invariant edges in drawing $D$ with respect to $F$ and removing $v$. Summing up leads to 
	\begin{align*}
		I_{\leq k}(D,D-v)\geq \sum_{i=0}^{k}(k+1-i)={k+2\choose 2}\text{.}
	\end{align*}
	\mEndProof
\end{proof}
\subsection{Seq-shellable drawings}
With help of simple sequences we define $k$-seq-shellability. For a sequence of distinct vertices $a_0,\ldots, a_k$ we assign to each vertex $a_i$ with $0\leq i \leq k\leq n-2$ a simple sequence $S_i$, under the condition  that $S_i$ does not contain any of the vertices $a_0,\ldots,a_{i-1}$. 
%
\begin{definition}[Seq-Shellability]
	Let $D$ be a good drawing of $K_n$. We call $D$ \emph{$k$-seq-shellable} for $k\geq 0$ if there exists a face $F\in \mathcal{F}(D)$ and a sequence of distinct vertices $a_0,\ldots,a_k$ such that $a_0$ is incident to $F$ and
	\begin{enumerate}
		\item for each $i\in \{1,\ldots,k\}$, vertex $a_i$ is incident to the face containing $F$ in drawing $D-\{a_0,\ldots,a_{i-1}\}$ and
		\item for each $i\in \{0,\ldots,k\}$, vertex $a_i$ has a simple sequence $S_i=(u_0,\ldots,u_{k-i})$ with $u_j\in V\setminus \{a_0,\ldots,a_i\}$ for $0\leq j \leq k-i $ in drawing $D-\{a_0,\ldots,a_{i-1}\}$. 
	\end{enumerate}
\end{definition}
Notice that if $D$ is $k$-seq-shellable for $k>0$, then the subdrawing $D-a_0$ is $(k-1)$-seq-shellable. Moreover, if $D$ is $k$-seq-shellable, 
then $D$ is also $j$-seq-shellable for $0\leq j\leq k$.
\begin{lemma}\label{lemma:seqsehll_01}
	If $D$ is a good drawing of $K_n$ and $D$ is $k$-seq-shellable 
	with $k \in \{0,\ldots,\lfloor\frac{n}{2}\rfloor-2\}$, 
	then $E_{\leq\leq k}(D)\geq 3{k+3\choose 3}$.
\end{lemma}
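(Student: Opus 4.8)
The plan is to argue by induction on $k$, peeling off the first vertex $a_0$ of the seq-shellable sequence and applying the recursive decomposition of the cumulated $k$-value from Lemma~\ref{lemma:recursive_double_cumu} with $v=a_0$ and reference face $F$. The three summands of that recursion are precisely the quantities for which earlier results supply lower bounds: the cumulated value $E_{\leq\leq k-1}(D-a_0)$ of the smaller subdrawing (handled by the induction hypothesis), the contribution $E_{\leq\leq k}(D,a_0)$ of the edges incident to $a_0$ (pinned down by Lemma~\ref{lemma:vatf}), and the number of invariant edges $I_{\leq k}(D,D-a_0)$ (bounded below by Lemma~\ref{lemma:inv_vwatf_seq} via the simple sequence of $a_0$).

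For the base case $k=0$ I would evaluate the recursion directly: $E_{\leq\leq -1}(D-a_0)$ is the empty sum and hence $0$, the edges at $a_0$ contribute $E_{\leq\leq 0}(D,a_0)=2{2\choose 2}=2$, and, since the simple sequence $S_0$ of $a_0$ consists of the single entry $u_0$, Lemma~\ref{lemma:inv_vwatf_seq} gives $I_{\leq 0}(D,D-a_0)\geq{2\choose 2}=1$. Adding these yields $E_{\leq\leq 0}(D)\geq 3=3{3\choose 3}$, as required.

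For the inductive step I would combine three facts. First, $a_0$ is incident to $F$, so by Lemma~\ref{lemma:vatf} and the remark following it $E_{\leq\leq k}(D,a_0)=2{k+2\choose 2}$. Second, the simple sequence $S_0=(u_0,\ldots,u_k)$ of $a_0$ has $k+1$ entries, so Lemma~\ref{lemma:inv_vwatf_seq} gives $I_{\leq k}(D,D-a_0)\geq{k+2\choose 2}$. Third, by the remark after the definition of seq-shellability the subdrawing $D-a_0$ is a $(k-1)$-seq-shellable drawing of $K_{n-1}$, so the induction hypothesis applies and yields $E_{\leq\leq k-1}(D-a_0)\geq 3{k+2\choose 3}$. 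Substituting into the recursion gives
\begin{align*}
	E_{\leq\leq k}(D)\geq 3{k+2\choose 3}+2{k+2\choose 2}+{k+2\choose 2}=3\Big({k+2\choose 3}+{k+2\choose 2}\Big)=3{k+3\choose 3}\text{,}
\end{align*}
where the last equality is Pascal's identity.

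The step that needs the most care is the index bookkeeping that legitimises the induction hypothesis on $D-a_0$: I must check that $k-1$ still lies in the admissible range for a drawing of $K_{n-1}$, i.e.\ that $k-1\leq\lfloor\frac{n-1}{2}\rfloor-2$. This follows from the hypothesis $k\leq\lfloor\frac{n}{2}\rfloor-2$ together with the elementary inequality $\lfloor\frac{n}{2}\rfloor\leq\lfloor\frac{n-1}{2}\rfloor+1$ (equality for even $n$, strict for odd $n$). The very same range condition is what Lemma~\ref{lemma:inv_vwatf_seq} demands when bounding the invariant edges, so once this bookkeeping is settled no genuine combinatorial difficulty remains; the real content has already been absorbed into Lemmas~\ref{lemma:recursive_double_cumu} and~\ref{lemma:inv_vwatf_seq}.
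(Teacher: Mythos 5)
Your proof is correct and follows essentially the same route as the paper's: induction on $k$, applying Lemma~\ref{lemma:recursive_double_cumu} with $v=a_0$ and bounding the three summands via the induction hypothesis, Lemma~\ref{lemma:vatf}, and Lemma~\ref{lemma:inv_vwatf_seq}, with the same index bookkeeping $k-1\leq\lfloor\frac{n-1}{2}\rfloor-2$. The only cosmetic difference is the base case, where the paper simply observes that the reference face is incident to at least three $0$-edges while you evaluate the recursion directly; both are fine.
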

\begin{proof}
	We proceed with induction over $k$.
	For $k=0$ the reference face is incident to at least three $0$-edges and it follows that
	\begin{align*}
	E_{\leq\leq 0}(D)\geq 3 = 3{0+3\choose 3}\text{.}
	\end{align*}
	For the induction step, let $D$ be $k$-seq-shellable with $a_0,\ldots,a_k$ and the sequences $S_0,\ldots,S_k$.
	Consider the drawing $D-a_0$ which is $(k-1)$-seq-shellable for $a_1,\ldots,a_k$ and $S_1,\ldots,S_k$. Since
	$k - 1 \leq (\lfloor\frac{n}{2}\rfloor-2)-1 \leq (\lfloor\frac{n-1}{2}\rfloor-2)$, we assume 
	\begin{align*}
	E_{\leq\leq k-1}(D-a_0)\geq 3{k+2\choose 3}\text{.}
	\end{align*}
	We use the recursive formulation introduced in lemma \ref{lemma:recursive_double_cumu}, i.e.
	\begin{align*}
		E_{\leq\leq k}(D) &= E_{\leq\leq k-1}(D-a_0) + E_{\leq\leq k}(D, a_0) + I_{\leq k}(D,D-a_0) \textrm{.}
	\end{align*}
	Because $a_0$ is incident to $F$, we have $E_{\leq\leq k}(D, a_0)= 2{k+2 \choose 2}$, 
	and with the simple sequence $S_0$ of $a_0$ follows $I_{\leq k}(D,D-a_0)\geq {k+2 \choose 2}$ (see lemma \ref{lemma:inv_vwatf_seq}). Together with the induction hypothesis, we have 
	\begin{align*}
		E_{\leq\leq k}(D) &\geq 3{k+2\choose 3} +2{k+2\choose 2}+{k+2\choose 2}= 3{k+3\choose 3} \textrm{.}
	\end{align*}
	\mEndProof
\end{proof}
Using lemmas \ref{lemma:doublecumubound} and \ref{lemma:seqsehll_01}, we are able to verify the Harary-Hill conjecture for seq-shellable drawings.
\begin{theorem}
	If $D$ is a good drawing of $K_n$ and $D$ is $(\lfloor\frac{n}{2}\rfloor-2)$-seq-shellable, then $cr(D)\geq H(n)$.
\end{theorem}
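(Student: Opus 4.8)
The plan is to reduce the theorem directly to the two results that have already been assembled, namely Lemma~\ref{lemma:seqsehll_01} and Lemma~\ref{lemma:doublecumubound}. The hypothesis is that $D$ is a good drawing of $K_n$ that is $(\lfloor\frac{n}{2}\rfloor-2)$-seq-shellable. By the remark immediately following the definition of seq-shellability, if $D$ is $k$-seq-shellable then it is also $j$-seq-shellable for every $0\leq j\leq k$. Hence $D$ is $k$-seq-shellable for every $k\in\{0,\ldots,\lfloor\frac{n}{2}\rfloor-2\}$ simultaneously, using the same reference face $F$ (or at worst a face for each $k$, but the monotonicity lets us fix one witnessing sequence and truncate it).

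With this in hand, the first step is to invoke Lemma~\ref{lemma:seqsehll_01}: for each $k\in\{0,\ldots,\lfloor\frac{n}{2}\rfloor-2\}$, seq-shellability at level $k$ yields
\begin{align*}
	E_{\leq\leq k}(D)\geq 3\binom{k+3}{3}\text{.}
\end{align*}
This is exactly the inequality required as the hypothesis of Lemma~\ref{lemma:doublecumubound}, holding across the full range $0\leq k\leq \lfloor\frac{n}{2}\rfloor-2$ with respect to the fixed reference face $F\in\mathcal{F}(D)$.

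The second and final step is to feed this family of bounds into Lemma~\ref{lemma:doublecumubound}, which states precisely that if $E_{\leq\leq k}(D)\geq 3\binom{k+3}{3}$ for all $0\leq k\leq \lfloor\frac{n}{2}\rfloor-2$ with respect to $F$, then $cr(D)\geq H(n)$. The conclusion $cr(D)\geq H(n)$ follows at once. I do not anticipate a genuine obstacle here, since all the substantive work has been front-loaded into the earlier lemmas; the only point that requires a moment of care is the quantifier bookkeeping, namely confirming that the $(\lfloor\frac{n}{2}\rfloor-2)$-seq-shellability hypothesis genuinely supplies the bound for \emph{every} $k$ in the range and not merely for the top value $k=\lfloor\frac{n}{2}\rfloor-2$. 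This is guaranteed by the downward-closure property of seq-shellability noted above, so the proof is essentially a one-line chaining of the two cited lemmas.
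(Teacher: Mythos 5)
Your proposal is correct and follows exactly the same route as the paper: use the downward-closure of seq-shellability to get $k$-seq-shellability for all $k\in\{0,\ldots,\lfloor\frac{n}{2}\rfloor-2\}$, apply Lemma~\ref{lemma:seqsehll_01} to obtain $E_{\leq\leq k}(D)\geq 3\binom{k+3}{3}$ across that range, and conclude with Lemma~\ref{lemma:doublecumubound}. Your extra remark about fixing a single reference face and truncating the witnessing sequence is a sensible point of care, but otherwise there is nothing to add.
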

\begin{proof}
	Let $D$ be a good drawing of $K_n$ and $(\lfloor\frac{n}{2}\rfloor-2)$-seq-shellable.
	Since $D$ is $(\lfloor\frac{n}{2}\rfloor-2)$-seq-shellable, it is also $k$-seq-shellable for $0\leq k \leq \lfloor\frac{n}{2}\rfloor-2$.
	We apply lemma \ref{lemma:seqsehll_01} and have $E_{\leq\leq k}(D)\geq 3{k+3\choose 3}$ for $0\leq k \leq \lfloor\frac{n}{2}\rfloor-2$
	and the result follows with lemma \ref{lemma:doublecumubound}.
	\mEndProof
\end{proof}
If a drawing $D$ of $K_n$ is $(\lfloor\frac{n}{2}\rfloor-2)$-seq-shellable, we omit the $(\lfloor\frac{n}{2}\rfloor-2)$ part and say $D$ is seq-shellable. 
The class of seq-shellable drawings contains all drawings that are $(\lfloor\frac{n}{2}\rfloor-2)$-seq-shellable.
%
%
%
\begin{theorem}\label{theorem:bish_impl_ss}
	The class of seq-shellable drawings strictly contains the class of bishellable drawings.
\end{theorem}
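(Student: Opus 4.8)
To show strict containment I must establish two things: (1) every bishellable drawing is seq-shellable, and (2) there exists a seq-shellable drawing that is not bishellable. The abstract already announces the witness for (2) as a specific drawing of $K_{11}$, so the real mathematical content is the inclusion in (1), while (2) is settled by exhibiting and analyzing that concrete example.

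For the inclusion I would argue directly from the definitions. Suppose $D$ is bishellable, i.e. $(\lfloor\frac{n}{2}\rfloor-2)$-bishellable, witnessed by sequences $a_0,\ldots,a_s$ and $b_s,\ldots,b_0$ with $s=\lfloor\frac{n}{2}\rfloor-2$ relative to a reference face $F$. The goal is to produce, for the same $F$, a single sequence $a_0,\ldots,a_s$ together with simple sequences $S_0,\ldots,S_s$ satisfying the seq-shellability conditions. I would take the $a_i$'s of the bishellable drawing as the seq-shellability sequence directly; condition~(1) of seq-shellability is then immediate from condition~(i) of bishellability. The work is in condition~(2): for each $i$ I must exhibit a simple sequence $S_i=(u_0,\ldots,u_{s-i})$ avoiding $a_0,\ldots,a_i$. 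The natural candidate is to read off the needed vertices from the \emph{$b$-sequence}, since condition~(iii) of bishellability guarantees exactly the disjointness between an initial segment of the $a$'s and a terminal segment of the $b$'s that condition~(2) requires. Concretely, in the subdrawing $D-\{a_0,\ldots,a_{i-1}\}$ I would try $S_i=(b_{s-i},b_{s-i-1},\ldots,b_0)$ (of length $s-i+1=s-i+1$), whose vertices are disjoint from $\{a_0,\ldots,a_i\}$ by (iii), and verify that each $b_j$ is incident to the face containing $F$ after the earlier $b$'s are removed, which is exactly condition~(ii) of bishellability.

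The main obstacle is that the simple-sequence definition requires incidence with $F$ in $D-\{a_0,\ldots,a_{i-1}\}$ \emph{after} also deleting the earlier entries of $S_i$, whereas bishellability's condition~(ii) guarantees incidence in $D-\{b_0,\ldots,b_{j-1}\}$ — the two deletion sets differ. I would resolve this by observing that deleting a vertex only merges faces (a face containing $F$ continues to contain $F$ after any further vertex removals), so incidence with the $F$-containing face is preserved under additional deletions; thus removing the $a$-prefix as well as the $b$-prefix cannot destroy the required incidence. This monotonicity is the crux and should be stated carefully, ideally as the same observation already used implicitly in Lemma~\ref{lemma:inv_vwatf_seq}.

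For the strictness witness I would present the $K_{11}$ drawing explicitly (figure plus a short combinatorial description), verify seq-shellability by naming a reference face $F$, the sequence $a_0,\ldots,a_3$ (since $\lfloor\frac{11}{2}\rfloor-2=3$), and an explicit simple sequence $S_i$ for each $a_i$, checking the incidence conditions face by face. Then I would show it is \emph{not} bishellable by arguing that no pair of sequences can simultaneously meet conditions~(i)--(iii): the likely route is to show that the longest achievable $b$-sequence is too short, so that the disjointness condition~(iii) at the critical index $i$ cannot be met for any choice. I expect the non-bishellability verification to be the most delicate part, since it is a nonexistence claim quantified over all sequence pairs; I would reduce it to a finite check driven by which vertices can ever become incident to the $F$-containing face under successive deletions.
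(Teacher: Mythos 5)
Your overall strategy is the same as the paper's: take the $a$-sequence of the bishellable witness as the seq-shellability sequence and manufacture each simple sequence $S_i$ out of the $b$-sequence, then exhibit the $K_{11}$ drawing for strictness. One concrete error: you write $S_i=(b_{s-i},b_{s-i-1},\ldots,b_0)$, but in a simple sequence the \emph{first listed} vertex is the one that must be incident to $F$ and is removed first; bishellability only guarantees that $b_0$ is incident to $F$ in $D$ itself, while $b_{s-i}$ is only incident to the $F$-containing face after $b_0,\ldots,b_{s-i-1}$ have already been deleted. So the sequence must be taken in ascending index order, $S_i=(b_0,b_1,\ldots,b_{s-i})$, as the paper does; your own prose ("each $b_j$ is incident ... after the earlier $b$'s are removed") shows you understand the removal order, so this reads as a slip induced by the descending notation $b_s,\ldots,b_0$ in the bishellability definition, but as written your $S_i$ fails the definition of a simple sequence. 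On the positive side, you correctly isolate and justify the one point the paper passes over silently: condition (2) of seq-shellability asks for a simple sequence \emph{in the subdrawing} $D-\{a_0,\ldots,a_{i-1}\}$, whereas bishellability's condition (ii) speaks of $D-\{b_0,\ldots,b_{j-1}\}$, and bridging the two requires the monotonicity observation that deleting further vertices only merges faces, so incidence to the $F$-containing face persists (using (iii) to ensure no $b_j$ is among the deleted $a$'s). Your plan for the non-bishellability of the $K_{11}$ witness (a finite case check over all admissible sequence pairs) matches what the paper implicitly relies on; neither you nor the paper spells it out.
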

\begin{proof}
	First, we show that $k$-bishellability implies $k$-seq-shellability. 
	Let $D$ be a $k$-bishellable drawing of $K_n$ with the associated sequences $a_0,\ldots,a_k$ and $b_0,\ldots,b_k$.
	In order to show that $D$ is $k$-seq-shellable, we choose $a_0,\ldots,a_k$ as vertex sequence and $k$ simple sequences $S_i$ for $0\leq i \leq k$ such that $S_i=(b_0,\ldots,b_{k-i})$.
	We assign simple sequence $S_i$ to vertex $a_i$ for each $0\leq i \leq k$ and see that $D$ is indeed seq-shellable.
	Furthermore, drawing $H$ of $K_{11}$ in figure \ref{fig:k11_seqshell} is not bishellable but seq-shellable. 
	It is impossible to find sequences $a_0,\ldots, a_3$ and $b_0,\ldots,b_3$ in $H$ that fulfill the definition of bishellability.
	However, $H$ is seq-shellable for face $F$, vertex sequence $(v_0,v_2,v_3,v_4)$ and the simple sequences $S_0=(v_1,v_2,v_7,v_4)$, $S_1=(v_1,v_8,v_6)$, $S_2=(v_1,v_8)$ and $S_3=(v_1)$.
	\mEndProof
\end{proof}
\begin{figure}
	\centering
	\includegraphics[width=0.65\linewidth]{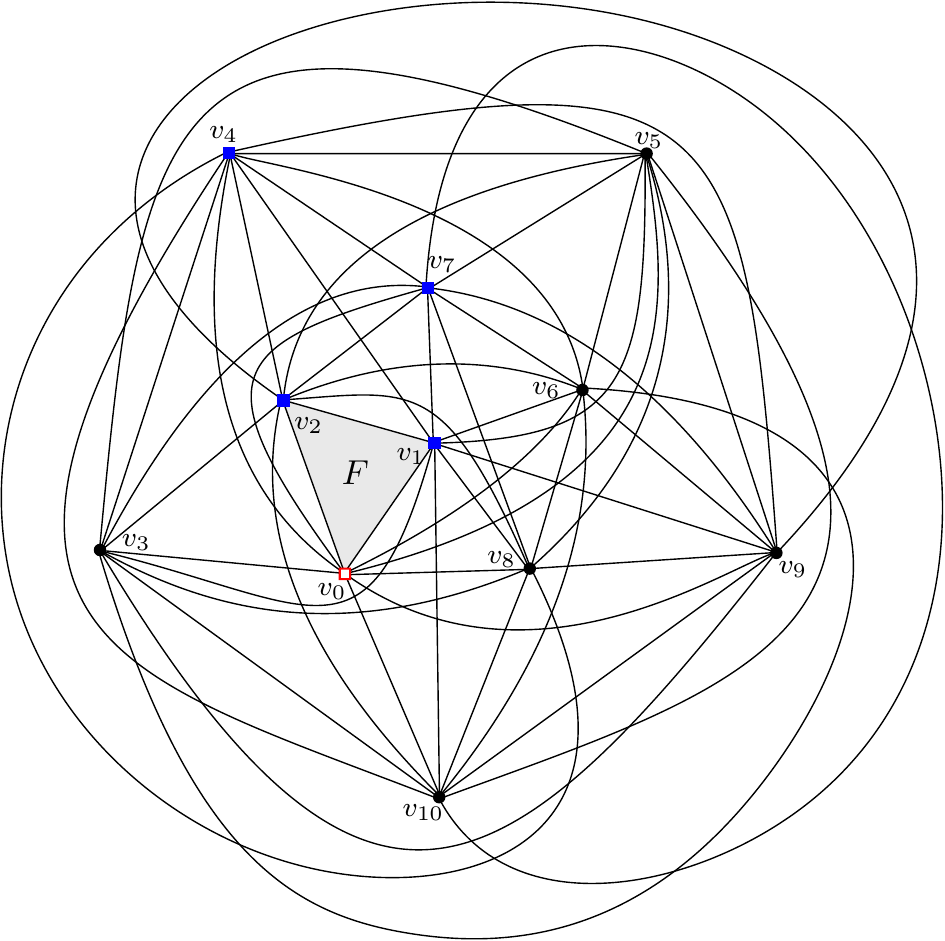} 
	\caption[Non-bishellable but seq-shellable drawing of $K_{11}$]{
		Drawing $H$ of $K_{11}$ which is not bishellable for any face, however it is seq-shellable for face $F$, vertex sequence $(v_0,v_2,v_3,v_4)$ and the simple sequences $S_0=(v_1,v_2,v_7,v_4)$, $S_1=(v_1,v_8,v_6)$, $S_2=(v_1,v_8)$ and $S_3=(v_1)$.
		Vertex $v_0$ and the vertices of $S_0$ are highlighted as unfilled and filled squares.}
	\label{fig:k11_seqshell}
	\vspace{1cm}
	\includegraphics[width=0.66\linewidth]{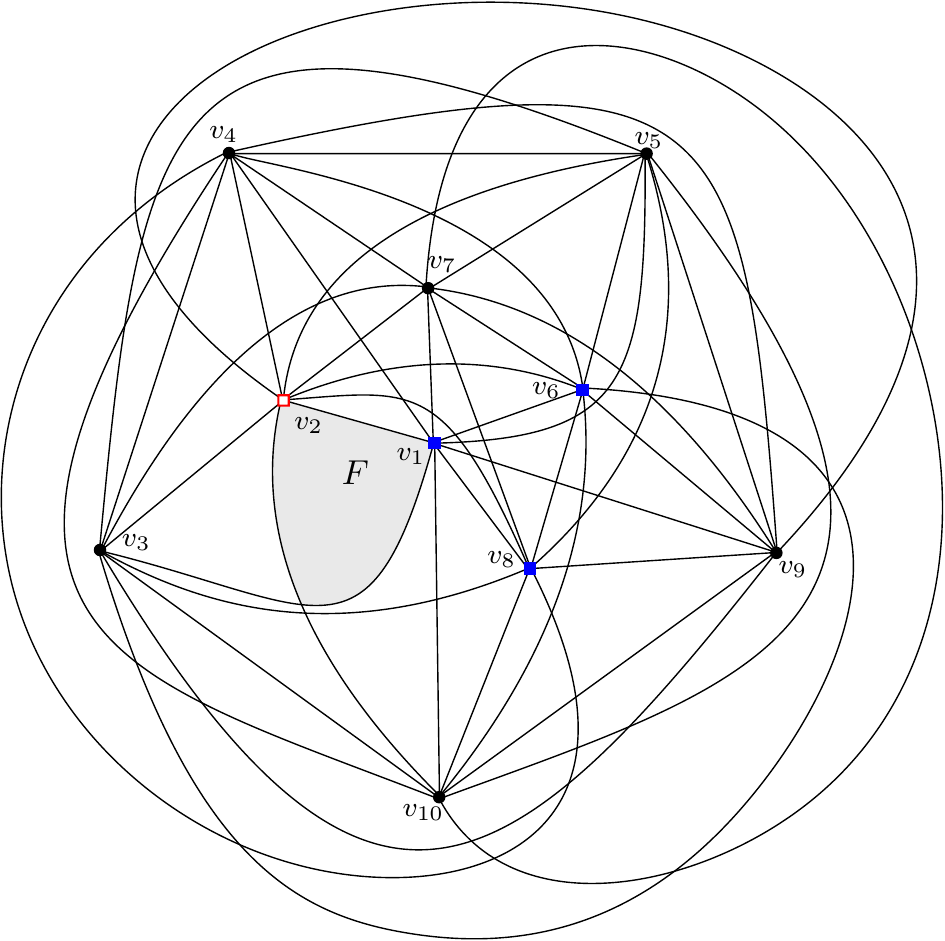} 
	\caption[Non-bishellable but seq-shellable drawing of $K_{11}$]{
		Subdrawing $H-v_0$ after removing vertex $v_0$ and its incident edges. The second vertex of the vertex sequence $v_2$ is incident to the face containing $F$ and has simple sequence $S_1$.
		Vertex $v_2$ and the vertices of $S_1$ are highlighted as unfilled and filled squares.}
	\label{fig:k11_seqshell_wo_v0}
\end{figure}

The distinctive difference between seq-shellability and bishellability is that the latter demands a symmetric structure in the sense that we can mutually exchange the sequences $a_0,\ldots,a_k$ and $b_0,\ldots,b_k$. Thus, the sequence $b_0,\ldots,b_{k-i}$ has to be the simple sequence of $a_i$ in the subdrawing $D-\{a_0,\ldots,a_{i-1}\}$ for all $0\leq i \leq k$ and vice versa, i.e. the sequence $a_0,\ldots,a_{k-i}$ has to be the simple sequence of $b_i$ in the subdrawing $D-\{b_0,\ldots,b_{i-1}\}$ for all $0\leq i \leq k$.
With seq-shellability we do not have this requirement. Here we have the vertex sequence $a_0,\ldots, a_k$ and each vertex $a_i$ with $0\leq i\leq k$ has its own (independent) simple sequence $S_i$.

Figure \ref{fig:ss_gadget_01} shows a gadget that visualizes the difference between bishellability and seq-shellability:
(a) shows a substructure with nine vertices that may occur in a drawing. We have the simple sequence $v_1,v_2,v_4$ for vertex $v_3$ in (b) and (c). Therefore, we can remove vertex $v_3$ and are able to guarantee the number of invariant edges.
After removing vertex $v_3$ in (d), there are simple sequences for vertex $v_1$ and $v_2$, thus the substructure is seq-shellable. 
However, it is impossible to apply the definition of bishellability.
We may use, for example, sequence $v_1,v_2,v_4$ as $a_0,\ldots,a_k$ sequence and we need a second sequence (the $b$ sequence) that satisfies the exclusion condition of the bishellability, i.e. for each $i \in\{ 0, \ldots , k\}$, the set $\{a_0, a_1 , \ldots a_i\} \cap \{b_{k-i}, b_{k-i-1} , \ldots , b_0\}$ has to be empty (see definition \ref{def:bishell}).
The first vertex of our second sequence (i.e. $b_0$) has to be $v_3$, because $b_0$ has to be incident to $F$. 
Now, for the second vertex we have to satisfy $\{a_0, a_1\} \cap \{b_1, b_0\}=\emptyset$, thus the second vertex has to be different from the first two vertices of the sequence $v_1,v_2,v_4$.
Because we only can choose between vertices $v_1$ and $v_2$, we cannot select a second vertex for our $b$ sequence. Thus, the structure is not bishellable.
%
We can argue the same way for the other possible sequences in the gadget. 


\begin{figure}
	\centering
	\includegraphics[width=0.62\linewidth]{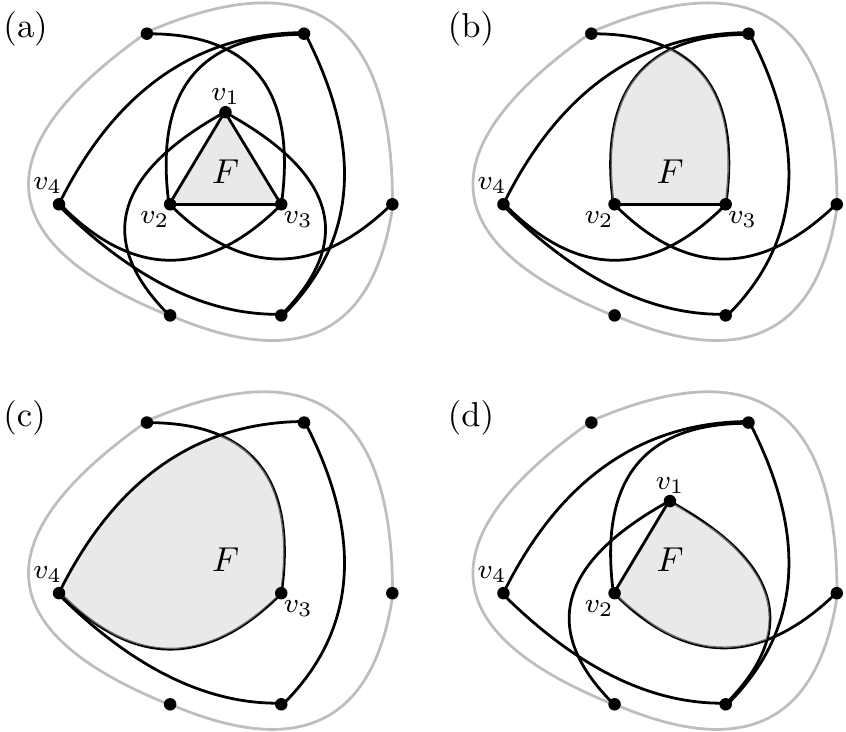} 
	\caption[A gadget that prohibits bishellability]{The gadget does not allow for a bishellability sequence, because only one of the two sequences $a_0,\ldots,a_k$ or $b_0,\ldots,b_k$ can be chosen due to condition three of the definition of bishellability. However, the gadget is seq-shellable.}
	\label{fig:ss_gadget_01}
\end{figure}
\section{Conclusion}\label{sec:conclusions}
In this work, we introduced the new class of seq-shellable drawings and verified the Harary-Hill conjecture for this class.
Seq-shellability is a generalization of bishellability, thus bishellability implies seq-shellability. In addition we exhibited a drawing of $K_{11}$ which is seq-shellable but not bishellable, hence seq-shellability is a proper extension of bishellability.
So far, we are not aware of an optimal seq-shellable but non-bishellable drawing and we close with the following open questions:
\begin{enumerate}
	\item 
		Can we find a construction method to obtain optimal drawings of $K_n$ that are seq-shellable but not bishellable?
	\item Does there exists a non-bishellable but seq-shellable drawing of $K_n$ with $10 \leq n < 14$, such that after removing the first vertex of the simple sequence the drawing $D-a_0$ is still non-bishellable. We found a drawing of $K_{14}$ with this property. 
\end{enumerate}
%

\bibliographystyle{splncs03}
\bibliography{bibliography/bibliography}

\end{document}